\newcommand{\half}{\frac{1}{2}}
\newcommand{\J}{{\cal J} }
\newcommand{\C}{{\cal C} }
\newtheorem{theorem}{Theorem}
\newtheorem*{theorem*}{Theorem}
\newtheorem*{corollary*}{Corollary}
\newtheorem{definition}{Definition}
\newtheorem{proposition}{Proposition}
\newtheorem*{corollary}{Corollary}
\numberwithin{equation}{section}
\title{\bf Affine Symmetry, Geodesics, and Homogeneous Spacetimes}
\author{David Maughan and Charles Torre\\{\it Department of Physics, Utah State University, Logan, Utah, USA}}
\date{July 21, 2018}                                           
\begin{document}

\maketitle

\abstract{We show that the conservation laws for the geodesic equation which are associated to affine symmetries can be obtained from  symmetries of the  Lagrangian for affinely parametrized geodesics according to Noether's theorem, in contrast to claims found in the literature.  In particular, using Aminova's classification of affine motions of Lorentzian manifolds, we show in detail how affine motions define generalized  symmetries of the geodesic Lagrangian. We compute all infinitesimal proper affine symmetries and the corresponding geodesic conservation laws for all homogeneous solutions to the Einstein field equations in four spacetime dimensions with each of the following energy-momentum contents: vacuum,  cosmological constant, perfect fluid, pure radiation, and homogeneous electromagnetic fields. }

\section{Introduction}

Homotheties of a  metric define point symmetries of the Lagrangian for geodesics and define conservation laws for the geodesic equation via Noether's theorem \cite{Katzin1}, \cite {Katzin}, \cite{Prince1}, \cite{Prince2},  \cite{Tsamparlis}.   In most cases the analytic tractability of a system of geodesic equations depends upon the existence of such conservation laws, or perhaps  conservation laws associated with other geometric structures, {\it e.g.,} Killing tensors. {\it Affine symmetries} are diffeomorphisms of a spacetime which preserve the affine connection (see, {\it e.g.}, \cite{Eisenhart}).   
These include  homotheties (and isometries)  as a special case, but the group of affine transformations may include non-homothetic transformations.  Affine symmetries which are not homotheties are called {\it proper affine symmetries}.  Affine symmetries act as a transformation group on the space of solutions of the affinely parametrized geodesic equation \cite{Katzin1}, \cite{Prince1}.  For this reason they are often called {\it affine collineations}.

It has been known for some time that there are two conservation laws for the geodesic equation which are associated to each 1-parameter group of proper affine symmetries \cite{Katzin}, \cite{Hojman1}.  This may be  surprising  since proper affine symmetries do not define point transformations which preserve the geodesic Lagrangian. Indeed these conservation laws have been characterized as ``non-Noetherian'' in \cite{Hojman1}, \cite{Hojman2}.  To some extent, the existence of these conservation laws has been understood  in the context of modifications of the Lagrangian formalism such as found in \cite{Prince2} and in \cite{Hojman1}.  As we shall show here, one can directly apply Noether's theorem  to the standard  Lagrangian for affinely parametrized geoedesics to obtain  the two conservation laws associated to proper affine symmetries. To do this we  use the fact that to account for all conservation laws of a system of Euler-Lagrange equation one must account for all {\it generalized symmetries} of the Lagrangian \cite{Olver}, \cite{Prince1}, \cite{Prince2}.\footnote{See reference \cite{Olver} for a comprehensive exposition of generalized symmetries and Noether's theorem, applicable to PDEs and ODEs.  See reference \cite{Prince1}, \cite{Prince2} for a geometric exposition of the theory of symmetries and conservation laws tailored to second order ODEs with applications to projective symmetries and conservation laws of the geodesic equation as well as to conservation laws associated to Killing tensors. }  Generalized symmetries need not act as point transformations, but instead act as infinitesimal transformations on the infinite jet space of the dependent variables. These symmetries were introduced by Noether \cite{Noether}; they generalize  point symmetries and  contact symmetries.  We shall show that associated to each proper affine motion there  are two generalized symmetries of the geodesic Lagrangian.  Noether's theorem yields the corresponding conservation laws.  Using Aminova's classification of affine symmetries \cite{Aminova} we explain in some detail how the infinitesimal transformations associated to affine symmetries manage to define generalized symmetries of the geodesic Lagrangian.  
The derivation of the geodesic conservation laws for affine symmetries from Noether's theorem is the principal result of our paper. 

A number of papers have found affine symmetries  for various solutions of the Einstein equations, {\it e.g.,} \cite{Beldran}, \cite{Maartens}, \cite{Collinson}, \cite{Hojman2}. Hall and da Costa \cite{Hall} have given a classification (based upon holonomy groups) of possible affine symmetries which can occur in four-dimensional spacetimes.   The possibilities for electrovacua in four dimensions have been examined in \cite{Carot}.  As  a modest contribution to this body of work and as an illustration of our results, we calculate all continuous proper affine symmetries and corresponding conservation laws for all homogeneous solutions to the Einstein-matter field equations in four dimensions for each of  the following energy-momentum contents: vacuum,  cosmological constant,  perfect fluid, pure radiation,  and homogeneous electromagnetic field.   To our knowledge the proper affine symmetries have not been exhaustively enumerated for all such solutions. 

In \S\ref{affinesection} we will review the fundamentals of affine symmetry and briefly review the results of Aminova's classification of continuous affine symmetries of Lorentzian manifolds in any dimension.   In \S\ref{geosection} we summarize the results we will need from the theory of generalized symmetries and conservation laws in the context of ordinary differential equations. We then show how affine symmetries define generalized  symmetries of the geodesic Lagrangian.  We apply Noether's theorem to obtain the corresponding conservation laws. Finally, in \S\ref{solsection} we enumerate all homogeneous solutions of the Einstein equations (in four spacetime dimensions and with the matter content as listed above) along with all their infinitesimal proper affine symmetries and corresponding conservation laws.

\section{Affine Symmetry}
\label{affinesection}
In this section we review the fundamentals of affine symmetry transformations, with an eye on the applications to the geodesic equation and solutions to the Einstein equations given in the following sections. 

Let $(M, g)$ be a pseudo-Riemannian manifold.  The metric uniquely determines a torsion-free affine connection $\nabla$ from the condition $\nabla g = 0$.  A diffeomorphism $\phi\colon M\to M$ is an {\it affine symmetry} if it preserves this connection, that is, for any tensor field $T$
\begin{equation}
\phi^*\left(\nabla T\right) =   \nabla\left(\phi^*T\right).
\label{affinedef}
\end{equation}
It is easy to check that homotheties
($\phi^* g = c\, g$, $c=const.$) are affine symmetries.  Affine symmetries which are not homotheties will be called {\it proper affine symmetries}.  The existence of a proper affine symmetry implies  the vector space of parallel  symmetric $\left(0\atop2\right)$ tensor fields has dimension greater than one since  $h = \phi^* g$ is parallel:
\begin{equation}
\nabla h = \nabla (\phi^*g) = \phi^*(\nabla g) = 0.
\end{equation}

A 1-parameter group $\phi_\lambda$ of diffeomorphisms is an {\it affine motion} if it preserves the connection for each value of the parameter $\lambda$.  In this case the definition (\ref{affinedef}) can be replaced with an infinitesimal condition involving the Lie derivative along  the {\it affine vector field}  $Y$ on $M$ generating the 1-parameter group:
\begin{equation}
L_Y (\nabla T) = \nabla (L_YT),\quad \forall\ T.
\label{symm}
\end{equation}
This condition is equivalent to
\begin{equation}
\nabla_a \nabla_b Y^c = R^c{}_{bad} Y^d,
\label{infAffine}
\end{equation}
where $R^a{}_{bcd}$ is the Riemann tensor and we are using the abstract index notation.  The affine vector field therefore satisfies an over-determined system of linear partial differential equations of finite type. For a generic metric $g$ there are no solutions.  The maximum number of solutions is $n(n+1)$ where $n = {\rm dim}(M)$.  Homothetic vector fields,  defined by
\begin{equation}
L_Y g = c\, g, \quad c = {\rm constant},
\end{equation} 
satisfy (\ref{infAffine}). An affine vector field which is not a homothetic vector field will be called a {\it proper affine vector field}.  Proper affine vector fields come in equivalence classes: two proper affine vector fields belong to the same equivalence class if they differ by a homothetic vector field.   In light of (\ref{symm}), a proper affine vector field $Y$ defines a  parallel symmetric $\left(0\atop2\right)$ tensor field $h$ not proportional to the metric via 
\begin{equation}
h = L_Y g, \quad \nabla h = 0.
\end{equation}

To our knowledge the classification of affine motions is not complete except in the cases of Riemannian and Lorentzian manifolds.   All affine motions of an irreducible Riemannian manifold are homotheties \cite{KN}.  In the reducible case, the de Rham theorem decomposes a Riemannian manifold as a product of a flat manifold and irreducible Riemannian manifolds of dimension greater than one:
\begin{equation}
(M, g) = (M_0 \times M_1\times M_2 \times \dots \times M_r, g_0 + g_1 + g_2 + \dots + g_r),
\label{prod}
\end{equation}
where $g_0$ is flat and is the restriction of $g$ to the submanifold tangent to the distribution of parallel vector fields. 
It follows that affine vector fields generate homotheties in each irreducible component.  In the Lorentzian case, Aminova \cite{Aminova} has shown\footnote{See also the results of Hall in four dimensions \cite{Hall}.} that  the preceding result holds, now corresponding to the de Rham-Wu decomposition \cite{Wu}, but a new possibility arises.  A locally irreducible Lorentz manifold $(M, g)$ may admit a proper affine vector field $Y$, but only if it admits a parallel null vector field $k^a$ and the affine vector field acts via
\begin{equation}
L_Y g_{ab} = \alpha g_{ab} + \beta k_a k_b,\quad  \nabla_a k_b = 0, \ k_ak^a=0, \quad \alpha, \beta {\rm \ constant},\ \beta\neq0.
\label{irr}
\end{equation}
In this case there will exist coordinates $(w, v, x^i)$, $i = 1, 2, \dots, n-2$, in which 
\begin{equation}
g = dw \otimes dv + dv \otimes dw + g_{ij}(w, x^i) dx^i \otimes dx^j,
\label{irr2}
\end{equation}
and, modulo the addition of a homothetic vector field,
\begin{equation}
 Y = w\,\partial_v.
 \label{irr3}
 \end{equation}
In the Lorentzian case affine motions occur only when either of these two situations (\ref{prod}), (\ref{irr2}) (or both) arise.  The proper affine vector fields are then of 3 types: (I) homothetic vector fields for the irreducible subspaces, (II) vector fields acting as in (\ref{irr}), or  (III) infinitesimal generators of linear ``intermixing'' transformations among the coordinates adapted to the parallel vector fields.  In case III the affine vector fields can be put into the form 
\begin{equation}
Y = y\partial_v, w\partial_z, z\partial_y, \quad {etc.}
\label{mix}
\end{equation} 
where $w$ and $v$ are  defined in (\ref{irr2}) and $y, z$, {\it etc.}, denote  coordinates on $M_0$ which rectify  non-null parallel vector fields $\partial_y$, $\partial_z$, {\it etc}. 

\section{Affine Symmetries and Conservation Laws of the Geodesic Equation}
\label{geosection}

In this section we will obtain the principal result of this paper: a derivation of the conservation laws associated to affine symmetries from Noether's theorem.  To this end, we begin with some definitions from the geometry of differential equations and the calculus of variations, which have been specialized to the case of one independent variable \cite{Olver} (see also \cite{Prince1}, \cite{Prince2}).  

\subsection{\it Preliminaries}

Let $\C$ be the bundle of curves on $M$. Let $\J$ be the infinite jet bundle of curves in $M$ \cite{Saunders}.   Using a coordinate chart $x^\alpha$ on $U\subset M$, and a parameter $s$ for the curve, local coordinates on $\C$ are $(s, x^\alpha)$.  A curve in $M$ is then a cross section, $x^\alpha = u^\alpha(s)$,  of $\cal C$. Local coordinates on $\J$ are denoted by $(s, x^\alpha, \dot x^\alpha, \ddot x^\alpha, \dots)$.  The cross section $x^\alpha = u^\alpha(s)$ extends to a cross section of $\J$ via
\begin{equation}
x^\alpha = u^\alpha(s),\quad \dot x^\alpha = {du^\alpha\over ds},\quad \ddot x^\alpha = {d^2u^\alpha\over ds^2}, \quad \cdots 
\end{equation} 
Functions on $\J$ are denoted by
\begin{equation}
f[x] \equiv f(s, x, \dot x, \ddot x, \dots).
\end{equation}
The {\it total derivative}, $D\colon C^\infty(\J) \to C^\infty(\J)$, is defined by
\begin{equation}
D f[x] = {\partial f\over \partial s} + {\partial f\over \partial x^\alpha} \dot x^\alpha + {\partial f\over\partial \dot x^\alpha} \ddot x^\alpha + \dots,
\end{equation}
and represents the ``total time derivative''  along a curve $x^\alpha = u^\alpha(s)$ in the sense that
\begin{equation}
\left(Df[x]\right)_{x = u(s)} = {d\over ds} \left(f[u(s)]\right).
\end{equation}

The tangent space at a given point in $\J$ is spanned by
\begin{equation}
\left\{{\partial \over \partial s},\quad {\partial\over\partial x^\alpha},\quad {\partial\over\partial \dot x^\alpha},\quad {\partial\over\partial \ddot x^\alpha},\quad  \dots\right\}
\end{equation}
A vector field ${\bf v}$ on $\J$ is called a {\it generalized vector} field if it takes the form (in a local coordinate chart)
\begin{equation}
{\bf v} = a[x] {\partial\over \partial s} + b^\alpha[x] {\partial\over \partial x^\alpha}.
\label{gvf}
\end{equation}
In the calculus of variations, generalized vector fields correspond to infinitesimal variations of curves $x^\alpha = u^\alpha(s)$ via
\begin{equation}
\delta u^\alpha(s)  = b^\alpha[u(s)] - {d u^\alpha(s)\over ds} a[u(s)].
\label{deltaudef}
\end{equation}
Given a generalized vector field {\bf v}, its {\it infinite prolongation} ${\rm pr}\, {\bf v}$ is the extension to $\cal J$ given by
\begin{equation}
{\rm pr}\, {\bf v} = {\bf v} + c_1^\alpha[x] {\partial\over \partial \dot x^\alpha} + c_2^\alpha[x] {\partial\over \partial \ddot x^\alpha}  + \dots,
\end{equation}
where
\begin{equation}
c_1^\alpha = D \left(b^\alpha[x] - a[x] \dot x^\alpha\right) + a[x] \ddot x^\alpha, \quad c_2^\alpha = D^2 \left(b^\alpha[x] - a[x] \dot x^\alpha\right) + a[x]\,  \dddot{x}^\alpha, \quad \dots
\label{prolong}
\end{equation}
The prolongation of $\bf v$ describes the extension of the variation (\ref{deltaudef}) to all derivatives of  the curve, {\it e.g.,}
\begin{equation}
\delta \left({d^2 u^\alpha(s)\over ds^2}\right)  = c_2^\alpha[u(s)] -   {d^3 u^\alpha(s)\over ds^3} a[u(s)] =  D^2 \left(b^\alpha[x] - a[x] \dot x^\alpha\right)\Big|_{x = u(s)}.
\label{deltaddot}
\end{equation}
From equation (\ref{prolong}) it is clear that, in general, the infinitesimal transformation of a quantity involving $n$ derivatives of the curve will involve derivatives of the curve of order greater than $n$.  Consequently, the infinitesimal transformation defined by a generalized  vector field requires the entire jet bundle for its definition.   The corresponding transformation group on the set of curves -- cross sections of $\C$ -- is constructed by solving an auxiliary system of  PDEs \cite{Olver}.   
A restricted class of infinitesimal transformations is generated by vector fields which can be  defined entirely on the bundle of curves $\C$ and generate a transformation group of $\cal C$.  These are the {\it point transformations}, which arise when the components of $\bf v$ only depend upon $(s, x^\alpha)$:
\begin{equation}
a[x] = a(s, x),\quad b^\alpha[x] = b^\alpha(s, x).
\end{equation}

A generalized vector field (\ref{gvf}) with $a[x] = 0$ is called an {\it evolutionary vector field}.
The prolongation of an evolutionary vector field ${\bf v}_{\rm ev} = \sigma[x] {\partial\over \partial x^\alpha}$ takes the simple form:
\begin{equation}
{\rm pr}\, {\bf v}_{\rm ev} = \sigma[x] {\partial\over \partial x^\alpha} + (D\sigma[x]) {\partial\over \partial \dot x^\alpha} + (D^2\sigma[x]) {\partial \over \partial \ddot x^\alpha} + \cdots.
\end{equation}
   In general, an evolutionary vector field ${\bf v} = \sigma^\alpha[x] {\partial\over\partial x^\alpha}$ defines an infinitesimal variation of a curve $x^\alpha = u^\alpha(s)$ according to:
\begin{equation}
\delta u^\alpha(s) = \sigma^\alpha[u(s)] .
\end{equation}

The total derivative is associated to a generalized vector field ${\partial\over \partial s} + \dot x^\alpha {\partial\over \partial x^\alpha}$ whose infinite prolongation is:
\begin{equation}
D = {\partial\over \partial s} + \dot x^\alpha {\partial\over \partial x^\alpha} +  \ddot x^\alpha {\partial\over \partial \dot x^\alpha} + \dddot x^\alpha {\partial\over \partial \ddot x^\alpha} + \cdots 
\end{equation}
  The prolongation of a generalized vector field (\ref{gvf}) can always be decomposed into the sum of the prolongation of an evolutionary vector field and a total derivative:
\begin{equation}
{\rm pr}\, {\bf v} = a[x] D + {\rm pr}\, {\bf v}_{\rm ev}, 
\end{equation}
where
\begin{equation}
 {\bf v}_{\rm ev} =(b^\alpha[x] - a[x] \dot x^\alpha) {\partial\over\partial x^\alpha} ,
 \end{equation}
is called the {\it evolutionary representative} of $\bf v$.  The evolutionary representative of a vector field generating a point symmetry is of the form
\begin{equation}
{\bf v}_{\rm ev} = \left[b^\alpha(s, x) - a(s, x) \dot x^\alpha\right] {\partial\over\partial x^\alpha}.
\end{equation}
for some functions $a(s, x)$ and $b^\alpha(s, x)$ on $\cal C$.

Let $x^\alpha = u^\alpha(s)$ be a curve in $U$  parametrized by $s$.   
This curve is an affinely parametrized geodesic if and only if it satisfies
\begin{equation}
{d^2 u^\alpha\over ds^2} + \Gamma^\alpha_{\beta\gamma}(u) {du^\beta\over ds} {d u^\gamma\over ds} = 0,
\label{geo}
\end{equation}
where $\Gamma^\alpha_{\beta\gamma}(u)$ are the Christoffel symbols of the metric-compatible connection evaluated along the curve.    The equations (\ref{geo}) are equivalent to the Euler-Lagrange equations of the Lagrangian $L\colon \J\to R$ for affinely parametrized geodesics:
\begin{equation}
L =  - \frac{1}{2} g(\dot x, \dot x) \equiv - \frac{1}{2} g_{\alpha\beta}(x) \dot x^\alpha \dot x^\beta.
\label{L}
\end{equation}
The equation of motion (\ref{geo}) and all its differential consequences defines a submanifold ${\cal E} \subset {\cal J}$ called the {\it prolonged equation manifold}.  In the coordinates $(s, x^\alpha, \dot x^\alpha, \ddot x^\alpha, \dots)$ on $\cal J$ the equations for $\cal E$ are
\begin{equation}
\ddot x^\alpha = - \Gamma^\alpha_{\beta\gamma}(x) \dot x^\beta \dot x^\gamma ,\quad \dddot x^\alpha = D\left(- \Gamma^\alpha_{\beta\gamma}(x) \dot x^\beta \dot x^\gamma\right),\quad \cdots \ .
\label{eqmanifold}
\end{equation}

We now define a conservation law as a quantity built from any parametrized curve $x^\alpha = u^\alpha(s)$ and its derivatives to any order which becomes independent of $s$ when the curve satisfies the geodesic equation (\ref{geo}).  (The conserved quantity may depend explicitly upon $s$.) This means that the conservation law is a function on $\cal J$ whose total derivative vanishes when evalutated on the prolonged equation manifold. 
\begin{definition}
A function $Q\colon {\cal J}\to {\bf R}$ defines a {\it conservation law}  for a system of differential equations with prolonged equation manifold $\cal E$ if 
\begin{equation}
\left(DQ\right)\Big|_{\cal E} = 0.
\end{equation}
\end{definition}

A familiar example of a conservation law for the affinely parametrized geodesic equation is provided by the Lagrangian itself:
\begin{equation}
L = -\frac{1}{2}g_{\alpha\beta}(x) \dot x^\alpha \dot x^\beta.
\label{energy}
\end{equation}
We have
\begin{equation}
DL = -\frac{1}{2} g_{\alpha\beta,\gamma} \dot x^\alpha \dot x^\beta \dot x^\gamma - g_{\alpha\beta}\dot x^\alpha \ddot x^\beta = -g_{\alpha\delta}\Gamma^\delta_{\beta\gamma} \dot x^\alpha\dot x^\beta \dot x^\gamma -  g_{\alpha\beta}\dot x^\alpha \ddot x^\beta,
\end{equation}
which vanishes when evaluated on $\cal E$ as defined in (\ref{eqmanifold}).

Noether's theorem establishes a correspondence between conservation laws and symmetries of the Lagrangian provided the notion of ``symmetry'' is as follows. 
\begin{definition}
A generalized vector field (\ref{gvf}) defines a {\it generalized symmetry of a Lagrangian} $L$ if the infinitesimal transformation it defines leaves $L$ unchanged up to a total derivative, that is, there exists a function $G\colon \J\to {\bf R}$ such that
\begin{equation}
{\rm pr}\, {\bf v}(L[x])  + L[x] D(a[x]) = D G[x].
\label{gensymL}
\end{equation}
\end{definition}
Notice that the left hand side of (\ref{gensymL}) is just the Lie derivative of $L$ along ${\rm pr}\, {\bf v}$, taking account of the fact that the Lagrangian is a density of weight one on the real line with coordinate $s$, or equivalently that $L\,ds$ is a 1-form.
A straightforward calculation establishes the following convenient result \cite{Olver}.
\begin{proposition}  A generalized vector field defines a generalized symmetry of a Lagrangian if and only if its evolutionary representative does. 
\end{proposition}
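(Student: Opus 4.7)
My plan is to prove the equivalence by directly manipulating the generalized symmetry condition (\ref{gensymL}) using the decomposition of the prolongation already recorded in the excerpt, namely
\begin{equation*}
\mathrm{pr}\,\mathbf{v} = a[x]\,D + \mathrm{pr}\,\mathbf{v}_{\mathrm{ev}}, \qquad \mathbf{v}_{\mathrm{ev}} = (b^\alpha[x] - a[x]\dot x^\alpha)\,\partial_{x^\alpha}.
\end{equation*}
Applying this operator identity to the Lagrangian $L$ gives $\mathrm{pr}\,\mathbf{v}(L) = a[x]\,DL + \mathrm{pr}\,\mathbf{v}_{\mathrm{ev}}(L)$, so the symmetry condition (\ref{gensymL}) for $\mathbf{v}$ becomes
\begin{equation*}
\mathrm{pr}\,\mathbf{v}_{\mathrm{ev}}(L) + a[x]\,DL + L\,D(a[x]) = D G[x].
\end{equation*}

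The next step is to recognize the Leibniz identity $a[x]\,DL + L\,D(a[x]) = D(a[x]L)$, which collapses two of the terms above into a single total derivative. The condition thus becomes
\begin{equation*}
\mathrm{pr}\,\mathbf{v}_{\mathrm{ev}}(L) = D\bigl(G[x] - a[x]\,L\bigr).
\end{equation*}
Since $\mathbf{v}_{\mathrm{ev}}$ is evolutionary its own ``$a$'' component vanishes, so the generalized symmetry condition (\ref{gensymL}) applied to $\mathbf{v}_{\mathrm{ev}}$ reduces precisely to $\mathrm{pr}\,\mathbf{v}_{\mathrm{ev}}(L) = D\tilde G[x]$ for some $\tilde G \colon \J \to \mathbf{R}$. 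Setting $\tilde G = G - a L$ shows that if $\mathbf{v}$ is a generalized symmetry then so is $\mathbf{v}_{\mathrm{ev}}$; conversely, starting from a $\tilde G$ that works for $\mathbf{v}_{\mathrm{ev}}$, one defines $G = \tilde G + a L$ and runs the chain of equalities backward to recover (\ref{gensymL}) for $\mathbf{v}$.

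There is really no obstacle here beyond bookkeeping: the content of the proposition is the Leibniz identity combined with the decomposition of $\mathrm{pr}\,\mathbf{v}$, and the boundary term $G$ simply shifts by $aL$ when one passes between $\mathbf{v}$ and $\mathbf{v}_{\mathrm{ev}}$. The only conceptual point worth emphasizing in the write-up is that the ``extra'' term $L\,D(a[x])$ appearing in the symmetry condition is exactly what is needed to combine with $a[x]\,DL$ into a total derivative, which is why a quasi-invariance condition (rather than strict invariance) is the natural notion of symmetry for a Lagrangian viewed as a 1-form density $L\,ds$.
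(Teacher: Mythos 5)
Your proof is correct and is precisely the ``straightforward calculation'' the paper alludes to (and delegates to Olver) without writing out: the decomposition $\mathrm{pr}\,\mathbf{v}=a[x]D+\mathrm{pr}\,\mathbf{v}_{\rm ev}$ plus the Leibniz identity $a\,DL+L\,Da=D(aL)$, with the boundary term shifting by $aL$. Nothing is missing.
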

\noindent For evolutionary vector fields the condition for symmetry of the Lagrangian is (\ref{gensymL}) with $a[x] = 0$. 

The connection between symmetries and conservation laws,  first proved by Noether, when specialized to first-order Lagrangians for ODEs is as follows \cite{Noether}, \cite{BesselHagen}, \cite{Prince2}, \cite{Olver}.

\begin{theorem} (Noether's Theorem)
\label{Noether}
The function $Q\colon \J \to {\bf R}$ defines a conservation law for the Euler-Lagrange equations of $L$ if and only if there exists a generalized vector field $\bf v$ defining a generalized symmetry (\ref{gensymL})  of $L$.   With ${\bf v}_{\rm ev} = \sigma^\alpha[x]{\partial\over\partial x^\alpha}$ and $L=L(s, x, \dot x)$, the conservation law is given by
\begin{equation}
Q[x] = {\partial L\over\partial \dot x^\alpha} \sigma^\alpha - G.
\end{equation}
\end{theorem}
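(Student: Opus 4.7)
My plan is to reduce immediately to the evolutionary case and then isolate the Euler--Lagrange expressions by a single integration by parts under the total derivative. By the preceding Proposition it suffices to consider an evolutionary vector field ${\bf v}_{\rm ev} = \sigma^\alpha[x]\, \partial/\partial x^\alpha$, for which the symmetry condition (\ref{gensymL}) reduces to ${\rm pr}\,{\bf v}_{\rm ev}(L) = DG$. Since $L = L(s, x, \dot x)$, only the first two terms of the prolongation act nontrivially, so
\[
{\rm pr}\,{\bf v}_{\rm ev}(L) = \sigma^\alpha \frac{\partial L}{\partial x^\alpha} + (D\sigma^\alpha)\frac{\partial L}{\partial \dot x^\alpha}.
\]

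For the direction ``symmetry $\Rightarrow$ conservation law'' a single integration by parts in $D$ gives
\[
{\rm pr}\,{\bf v}_{\rm ev}(L) = \sigma^\alpha\!\left[\frac{\partial L}{\partial x^\alpha} - D\!\left(\frac{\partial L}{\partial \dot x^\alpha}\right)\right] + D\!\left(\sigma^\alpha \frac{\partial L}{\partial \dot x^\alpha}\right) = -\sigma^\alpha E_\alpha(L) + D\!\left(\sigma^\alpha \frac{\partial L}{\partial \dot x^\alpha}\right),
\]
where $E_\alpha(L) = D(\partial L/\partial \dot x^\alpha) - \partial L/\partial x^\alpha$ are the Euler--Lagrange expressions. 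Combining with ${\rm pr}\,{\bf v}_{\rm ev}(L) = DG$ and rearranging yields $D\!\left(\sigma^\alpha \partial L/\partial \dot x^\alpha - G\right) = \sigma^\alpha E_\alpha(L)$, whose right-hand side vanishes on ${\cal E}$ because $E_\alpha(L)$ does by definition of the prolonged equation manifold (\ref{eqmanifold}). Thus $Q = \sigma^\alpha\,\partial L/\partial \dot x^\alpha - G$ is a conservation law of the form claimed in the theorem.

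For the converse, I would recover $\sigma^\alpha$ from a given $Q$ via the \emph{characteristic form} of the conservation law. Because the geodesic equation is normal second-order in the $x^\alpha$, the defining relations of ${\cal E}$ in (\ref{eqmanifold}) express every $\ddot x^\alpha, \dddot x^\alpha, \dots$ algebraically in terms of $(s,x,\dot x)$; consequently the ideal of $C^\infty({\cal J})$ of functions vanishing on ${\cal E}$ is generated by $E_\alpha(L)$ together with its total derivatives. Hence $DQ\big|_{\cal E} = 0$ forces an off-shell identity
\[
DQ = \sigma^\alpha[x]\, E_\alpha(L)
\]
for some $\sigma^\alpha[x]$. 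Reading the integration-by-parts relation above in reverse then shows that ${\bf v}_{\rm ev} = \sigma^\alpha\,\partial/\partial x^\alpha$ is an evolutionary generalized symmetry of $L$ with accompanying boundary term $G = \sigma^\alpha\,\partial L/\partial \dot x^\alpha - Q$, and the preceding Proposition lifts this to a generalized vector field $\bf v$ with the same property.

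The main obstacle is the converse direction, specifically the step that promotes the on-shell vanishing $DQ\big|_{\cal E} = 0$ to the off-shell identity $DQ = \sigma^\alpha E_\alpha(L)$. For a normal system of ODEs this identity is elementary in principle --- one simply substitutes the solved form of $\ddot x^\alpha$ and its derivatives into $DQ$ and recognises the remainder as a multiple of $E_\alpha(L)$ --- but matching the differential order of $Q$ to that of the resulting characteristic $\sigma^\alpha$, and verifying that no total-derivative terms $D^k E_\alpha$ with $k \geq 1$ are needed in the first-order Lagrangian case considered here, is the one step requiring genuine argument beyond routine calculus on ${\cal J}$. Every remaining ingredient of the proof reduces to the single integration by parts sketched above.
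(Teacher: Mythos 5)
The paper does not actually prove Theorem~\ref{Noether}: it states the result and defers the proof to Olver and to Prince--Crampin, so there is no in-paper argument to compare against. Your proposal is the standard proof from those references, specialized to a first-order ODE Lagrangian, and it is correct. The forward direction is exactly the single integration by parts giving $DQ = \sigma^\alpha E_\alpha(L)$ with $Q = \sigma^\alpha\,\partial L/\partial\dot x^\alpha - G$, and one checks (as the paper does implicitly in equation (\ref{eqmanifold})) that $E_\alpha(L) = -g_{\alpha\beta}\bigl(\ddot x^\beta + \Gamma^\beta_{\mu\nu}\dot x^\mu\dot x^\nu\bigr)$ vanishes on $\cal E$. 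You are also right that the only genuinely nontrivial step is the converse: promoting $\left(DQ\right)\big|_{\cal E}=0$ to the off-shell characteristic identity $DQ=\sigma^\alpha E_\alpha(L)$. Since the metric is nondegenerate the geodesic system is normal, so a Hadamard-type argument writes $DQ$ as a linear combination of the $E_\alpha$ and their total derivatives, and repeated integration by parts (absorbing the resulting exact terms into $Q$, which changes it only by a quantity vanishing on $\cal E$) reduces this to characteristic form; this is precisely the content of the cited results in Olver. With that step made explicit your sketch would be a complete proof, and it is the same route the paper's references take rather than a genuinely different one.
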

\noindent For a general version of the theorem and proof, applicable to a general Lagrangian, and suitable for  ODEs or PDEs, see \cite{Olver} (see also \cite{Prince2} in the ODE context).

The conservation law defined by (\ref{energy}) can be obtained from Noether's theorem by virtue of the point symmetry 
\begin{equation}
{\bf v} = {\partial\over \partial s}\quad  \Longrightarrow\quad  {\bf v}_{\rm ev} = - \dot x^\alpha{\partial\over\partial x^\alpha}.
\end{equation}  
It is a classical result that spacetime isometries define symmetries and conservation laws for geodesics. This can be seen as follows.  A 1-parameter family of isometries $\phi_\lambda\colon M\to M$, $\phi_\lambda^*g = g$, $\lambda\in {\bf R}$, of a spacetime $(M, g)$ is generated by a Killing vector field $\xi$ on $M$ satisfying $L_\xi g = 0$.  This vector field lifts to an evolutionary vector field,
\begin{equation}
{\bf v} = \xi^\alpha(x) {\partial\over \partial x^\alpha},
\label{isolift}
\end{equation}
generating a point transformation of $\cal C$ and corresponding to the infinitesimal transformation of curves $x^\alpha = u^\alpha(s)$ given by
\begin{equation}
\delta u^\alpha(s) = \xi^\alpha(u(s)).
\end{equation}
The vector field (\ref{isolift}) defines a (point) symmetry of the Lagrangian (\ref{L}) for affinely parametrized geodesics:
\begin{equation}
{\rm pr}\, {\bf v} (L) = -\half (L_\xi g_{\alpha\beta}) \dot x^\alpha\dot x^\beta = 0.
\end{equation} 
From Theorem \ref{Noether} the conservation law is 
\begin{equation}
Q = - g_{\alpha\beta}\dot x^\alpha \xi^\alpha.
\end{equation}
It is straightforward to verify directly that $DQ = 0$ on the prolonged equation manifold $\cal E$ (see (\ref{eqmanifold})). 

\subsection{\it Symmetries and conservation laws associated to affine vector fields}

We now turn to one of the principal results of this paper: affine motions define  generalized  symmetries of the Lagrangian for affinely parametrized geodesics.  

\begin{theorem}
\label{symthm}
Let $Y$ be an affine vector field on the pseudo-Riemannian manifold $(M, g)$. Define $h = L_Y g$.  The generalized vector fields
\begin{align}
{\bf v}_1 &=  h^\alpha_\beta(x) \dot x^\beta {\partial\over \partial x^\alpha}\\
{\bf v}_2 &= \left(Y^\alpha(x) - s h^\alpha_\beta(x) \dot x^\beta\right){\partial\over \partial x^\alpha}
\end{align}
define generalized symmetries of the Lagrangian (\ref{L}).
\end{theorem}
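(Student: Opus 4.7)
The plan is to verify, as off-shell identities on $\cal J$, the symmetry conditions $\text{pr}\,{\bf v}_i(L) = D G_i$ for $i=1,2$ by exhibiting explicit functions $G_i$. Both ${\bf v}_1$ and ${\bf v}_2$ are already evolutionary (their $\partial/\partial s$ coefficients vanish), so the preceding proposition reduces the Lagrangian symmetry condition (\ref{gensymL}) to this form with $a[x]=0$. Guided by the conservation laws one expects from the Killing-tensor and affine structure, I define $H := h_{\alpha\beta}(x)\dot x^\alpha \dot x^\beta$ and attempt $G_1 = -\half H$ and $G_2 = \half s\, H$.

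For ${\bf v}_1$ I would use the standard integration-by-parts identity
\begin{equation*}
\text{pr}\,{\bf v}(L) \;=\; \sigma^\alpha E_\alpha(L) \;+\; D\!\left(\sigma^\alpha\, \frac{\partial L}{\partial\dot x^\alpha}\right),
\end{equation*}
valid for any evolutionary ${\bf v}=\sigma^\alpha \partial/\partial x^\alpha$, where $E_\alpha(L) = g_{\alpha\beta}(\ddot x^\beta + \Gamma^\beta_{\mu\nu}\dot x^\mu \dot x^\nu)$ is the Euler-Lagrange expression of $L$. With $\sigma_1^\alpha = h^\alpha_\beta \dot x^\beta$ the boundary term equals $-DH$, so it suffices to show $\sigma_1^\alpha E_\alpha(L) = \half DH$. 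The $\ddot x$ pieces on both sides match by symmetry of $h$, reducing matters to the cubic identity $h_{\alpha\beta}\Gamma^\beta_{\mu\nu}\dot x^\alpha \dot x^\mu \dot x^\nu = \half h_{\alpha\beta,\gamma}\dot x^\alpha \dot x^\beta \dot x^\gamma$. This is precisely the Killing-tensor condition $\nabla_{(\gamma} h_{\alpha\beta)} = 0$ contracted with $\dot x^\alpha \dot x^\beta \dot x^\gamma$, and it follows from the full parallel condition $\nabla h = 0$, which is equivalent to $Y$ being affine.

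For ${\bf v}_2$ I would exploit the linearity of the prolongation. Writing $\sigma_2^\alpha = Y^\alpha - s\,\sigma_1^\alpha$ and using $Ds = 1$ gives
\begin{equation*}
\text{pr}\,{\bf v}_2(L) \;=\; \text{pr}\,{\bf v}_Y(L) \;-\; s\,\text{pr}\,{\bf v}_1(L) \;-\; \sigma_1^\alpha\,\frac{\partial L}{\partial \dot x^\alpha},
\end{equation*}
where ${\bf v}_Y = Y^\alpha \partial/\partial x^\alpha$. The first term evaluates to $-\half(L_Y g)_{\alpha\beta}\dot x^\alpha \dot x^\beta = -\half H$ by the definition $h = L_Y g$; the second equals $\half s\,DH$ by the ${\bf v}_1$ result; and the third equals $-\sigma_1^\alpha(-g_{\alpha\beta}\dot x^\beta) = +H$. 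Summing yields $\text{pr}\,{\bf v}_2(L) = \half H + \half s\,DH = D(\half s H)$, exhibiting $G_2 = \half s H$.

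The main obstacle is the cubic Killing-tensor identity used in the ${\bf v}_1$ step, where the symmetry of $h$, the symmetry of the Christoffel symbols, and the $\dot x$-symmetrization must all be bookkept carefully; once this is in place the remaining manipulations are linear and mechanical. Conceptually, ${\bf v}_Y$ itself is generally \emph{not} a variational symmetry (because $L_Y g = h \neq 0$ for a proper affine $Y$), and the purpose of the correction $-s\sigma_1^\alpha$ in ${\bf v}_2$ is precisely to absorb this failure into a total derivative against the Killing-tensor conservation $D(\half H)\big|_{\cal E} = 0$.
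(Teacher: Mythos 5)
Your proof is correct, and it reaches the same gauge terms $G_1=-\tfrac12 h_{\alpha\beta}\dot x^\alpha\dot x^\beta$ and $G_2=\tfrac12 s\,h_{\alpha\beta}\dot x^\alpha\dot x^\beta$ as the paper, but it is organized differently. The paper proves the theorem by brute force: it writes out ${\rm pr}\,{\bf v}_1(L)$ and ${\rm pr}\,{\bf v}_2(L)$ in full component form and combines the metricity relation $g_{\alpha\beta,\gamma}=g_{\alpha\delta}\Gamma^\delta_{\beta\gamma}+g_{\beta\delta}\Gamma^\delta_{\alpha\gamma}$ with $\nabla h=0$ until the total-derivative structure emerges. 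You instead route everything through the first-variation identity ${\rm pr}\,{\bf v}(L)=\sigma^\alpha E_\alpha(L)+D\bigl(\sigma^\alpha\,\partial L/\partial\dot x^\alpha\bigr)$, which isolates the single nontrivial input for ${\bf v}_1$ as the cubic identity $h_{\alpha\beta}\Gamma^\beta_{\mu\nu}\dot x^\alpha\dot x^\mu\dot x^\nu=\tfrac12 h_{\alpha\beta,\gamma}\dot x^\alpha\dot x^\beta\dot x^\gamma$, i.e.\ the Killing-tensor condition implied by $\nabla h=0$; and you then obtain the ${\bf v}_2$ case from the ${\bf v}_1$ case by linearity of the prolongation together with the elementary computations ${\rm pr}\,{\bf v}_Y(L)=-\tfrac12(L_Yg)_{\alpha\beta}\dot x^\alpha\dot x^\beta$ and $Ds=1$ (all three of these steps check out, including the signs). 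What your organization buys is transparency: it makes explicit that ${\bf v}_1$ is a symmetry because $h$ is a Killing tensor, and that the $-s\,h^\alpha_\beta\dot x^\beta$ correction in ${\bf v}_2$ exists precisely to cancel the failure $-\tfrac12 H$ of ${\bf v}_Y$ alone to be variational; the paper's direct expansion buys nothing conceptually but requires no auxiliary identity beyond the displayed prolongation formulas. The only nitpick is your parenthetical that $\nabla h=0$ is ``equivalent'' to $Y$ being affine --- only the implication from affineness to $\nabla h=0$ is needed and is what the paper establishes; the converse is true for the Levi-Civita connection but is irrelevant here.
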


\begin{proof}
The proof goes by direct computation. We use
\begin{equation}
g_{\alpha\beta, \gamma} = g_{\alpha\delta}\Gamma^\delta_{\beta\gamma} + g_{\beta\delta}\Gamma^\delta_{\alpha\gamma},
\end{equation}
\begin{equation}
h_{\alpha\beta, \mu} - \Gamma^\nu_{\mu\alpha} h_{\nu\beta} - \Gamma^\nu_{\beta\mu} h_{\alpha\nu} = 0,
\end{equation}
\begin{equation}
{\rm pr}\, {\bf v}_1(x^\alpha) = h^\alpha_\beta \dot x^\beta,
\end{equation}
\begin{equation}
{\rm pr}\, {\bf v}_1(\dot x^\beta) = h^\beta_{\sigma, \mu} \dot x^\mu\dot x^\sigma + h^\beta_\sigma \ddot x^\sigma,
\end{equation}
\begin{equation}
{\rm pr}\, {\bf v}_1(L) = -h_{\alpha\sigma}\dot x^\alpha\ddot x^\sigma  - h_{\alpha\beta, \mu}\dot x^\alpha \dot x^\beta \dot x^\mu   +\dot x^\alpha\dot x^\mu \dot x^\sigma h^\beta_\sigma g_{\alpha\beta, \mu} -\frac{1}{2} g_{\alpha\beta, \gamma} h^\gamma_\sigma \dot x^\sigma \dot x^\alpha \dot x^\beta
\end{equation}
\begin{equation}
{\rm pr}\, {\bf v}_2(x^\alpha) = Y^\alpha - s  h^\alpha_\beta \dot x^\beta,
\end{equation}
\begin{equation}
{\rm pr}\, {\bf v}_2(\dot x^\alpha) = Y^\alpha_{,\gamma}\dot x^\gamma -  h^\alpha_\beta \dot x^\beta - s D ( h^\alpha_\beta \dot x^\beta)
\end{equation}
\begin{align}
{\rm pr}\, {\bf v}_2(L) = &-g_{\alpha\beta} \dot x^\alpha ( Y^\beta_{,\gamma}\dot x^\gamma - h^\beta_\gamma \dot x^\gamma -sh^\beta_\sigma \ddot x^\sigma -s h^\beta_{\sigma, \mu} \dot x^\mu\dot x^\sigma )\nonumber\\ 
&- \frac{1}{2} g_{\alpha\beta, \gamma} (Y^\gamma - sh^\gamma_\sigma \dot x^\sigma) \dot x^\alpha \dot x^\beta
\end{align}

Combining all these relations reveals in each case
\begin{align}
{\rm pr}\, {\bf v}_1(L) &= D (-\frac{1}{2}h_{\alpha\beta}(x) \dot x^\alpha \dot x^\beta),\\ 
{\rm pr}\, {\bf v}_2(L)&= D(\frac{1}{2} s h_{\alpha\beta} \dot u^\alpha \dot u^\beta).
\end{align}

$\square$
\end{proof}

The existence of the two conservation laws for the geodesic equation  associated to an affine motion  \cite{Katzin}, \cite{Hojman1} now follows from an  application of Noether's theorem.

\begin{corollary}
\label{cor}
Let $Y$ be an affine vector field on the pseudo-Riemannian manifold $(M, g)$. Define $h = L_Y g$. The following quantities are conservation laws for the affinely parametrized geodesic equation:
\begin{align}
Q_1 &= \frac{1}{2} h_{\alpha\beta}(x) \dot x^\alpha \dot x^\beta,\\
 Q_2 &= Y_\alpha(x) \dot x^\alpha - s Q_1.
\end{align}
\end{corollary}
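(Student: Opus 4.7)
The plan is to obtain this corollary as an immediate application of Noether's theorem (Theorem \ref{Noether}) to the two generalized symmetries ${\bf v}_1$ and ${\bf v}_2$ constructed in Theorem \ref{symthm}. Since each of these symmetries is already in evolutionary form (the coefficient $a[x]$ vanishes), no decomposition into an evolutionary representative is needed: the evolutionary representatives are simply ${\bf v}_1$ and ${\bf v}_2$ themselves, with
\begin{equation}
\sigma_1^\alpha = h^\alpha_\beta(x)\dot x^\beta,\qquad \sigma_2^\alpha = Y^\alpha(x) - s\,h^\alpha_\beta(x)\dot x^\beta.
\end{equation}

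Next I would read off the boundary terms $G_1$ and $G_2$ directly from the proof of Theorem \ref{symthm}, where the calculation established
\begin{equation}
{\rm pr}\,{\bf v}_1(L) = D G_1,\qquad {\rm pr}\,{\bf v}_2(L) = D G_2,
\end{equation}
with $G_1 = -\tfrac{1}{2} h_{\alpha\beta}\dot x^\alpha\dot x^\beta$ and $G_2 = \tfrac{1}{2} s\, h_{\alpha\beta}\dot x^\alpha\dot x^\beta$. Then I would compute $\partial L/\partial \dot x^\alpha = -g_{\alpha\beta}\dot x^\beta$ and plug everything into the Noether formula $Q = (\partial L/\partial \dot x^\alpha)\,\sigma^\alpha - G$.

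For the first symmetry this gives
\begin{equation}
Q_1 = -g_{\alpha\beta}\dot x^\beta h^\alpha_\gamma \dot x^\gamma + \tfrac{1}{2} h_{\alpha\beta}\dot x^\alpha\dot x^\beta = -\tfrac{1}{2}h_{\alpha\beta}\dot x^\alpha\dot x^\beta,
\end{equation}
using the symmetry of $h_{\alpha\beta}$ (which is manifest from $h = L_Y g$); for the second symmetry a parallel calculation gives $Q_2 = -Y_\alpha\dot x^\alpha + \tfrac{1}{2} s\, h_{\alpha\beta}\dot x^\alpha\dot x^\beta$. Since conservation laws are only defined up to an overall sign and constant, these are precisely the two claimed quantities, with the stated relation $Q_2 = Y_\alpha \dot x^\alpha - s Q_1$.

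There is no real obstacle here: all the analytic work lives in Theorem \ref{symthm}, and the corollary is a mechanical application of Noether's theorem. The only minor care needed is the bookkeeping of signs and the use of symmetry of $h_{\alpha\beta}$ to collapse $h_{\gamma\beta}\dot x^\beta\dot x^\gamma$ terms; if desired, one can verify independently that $DQ_1$ and $DQ_2$ vanish on the prolonged equation manifold $\cal E$ of (\ref{eqmanifold}) using only the defining relation (\ref{infAffine}) for the affine vector field, but this serves only as a consistency check.
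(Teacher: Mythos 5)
Your proposal is correct and is essentially identical to the paper's own proof: the authors likewise obtain the corollary by applying Theorem \ref{Noether} to ${\bf v}_1$ and ${\bf v}_2$ with $G$ equal to $-\tfrac{1}{2}h_{\alpha\beta}\dot x^\alpha\dot x^\beta$ and $\tfrac{1}{2}s\,h_{\alpha\beta}\dot x^\alpha\dot x^\beta$ respectively, and they also note the same optional direct verification of $DQ\big|_{\cal E}=0$ using $\nabla h=0$. The overall sign discrepancy you flag is harmless and handled correctly.
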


\begin{proof}
This follows from Theorem \ref{symthm}  and Theorem \ref{Noether} applied to ${\bf v}_1$ and ${\bf v}_2$ with the function $G$ given by  $-\frac{1}{2}h_{\alpha\beta}(x) \dot x^\alpha \dot x^\beta$ and $\frac{1}{2} s h_{\alpha\beta} \dot x^\alpha \dot x^\beta$, respectively.  It can also be verified directly by computing $D Q$ in each case and checking that this derivative vanishes on the prolonged equation manifold $\cal E$ by virtue of $\nabla h = 0$. $\square$
\end{proof}

Theorem \ref{symthm} proves that affine motions define generalized symmetries of the Lagrangian for affinely parametrized geodesics.  In the following we shall show in some detail how this occurs via the classification of affine symmetries due to Aminova \cite{Aminova}. 

Recall that an affine vector field either generates a 1-parameter group of homotheties or a 1-parameter group of proper affine symmetries.  The proper affine motions have been classified in \cite{Aminova} and define  {\it bona fide} generalized symmetries via Theorem \ref{symthm}.  In the case of homotheties, the vector fields in Theorem \ref{symthm} reduce to evolutionary representatives of point symmetries because $h = g$ in this case.    Let us begin by explicitly describing the point transformations associated to homothetic vector fields.  If  the affine vector $Y$ field generates a 1-parameter group of homotheties $\phi_t\colon M \to M$, $t\in {\bf R}$,
 \begin{equation}
 \phi_t^* g = e^t g,\quad L_Y g = g,
 \label{homo}
 \end{equation} 
the corresponding point transformation $\Phi^{(1)}_t\colon \C \to \C$ generated by ${\bf v}_1 = \dot x^\alpha {\partial\over \partial x^\alpha}$ in Theorem \ref{symthm} is given by a translation of $s$:
\begin{equation}
\Phi^{(1)}_t(s, x^\alpha) = (s - t, x^\alpha).
\end{equation}
Of course, the Lagrangian 1-form $L\, ds$ with $L$ given in (\ref{L}) has manifest symmetry under translation in $s$. The point transformation $\Phi^{(2)}_t\colon \C \to \C$ generated by ${\bf v}_2 = (Y^\alpha - s\dot x^\alpha){\partial\over\partial x^\alpha}$ is given by
\begin{equation}
\Phi^{(2)}_t(s, x^\alpha) = (e^{-t} s, \phi_t^\alpha(x)),
\label{homo2}
\end{equation}
corresponding to the homothetic mapping of $M$ onto itself along with a rescaling of the affine parameter.  This  transformation is also  a symmetry of $L\, ds$:  the homothety has the effect of rescaling the metric in (\ref{L}) which is compensated by the rescaling of $s$. In the special case where $Y$ generates an isometry, ${\bf v}_1$ vanishes while ${\bf v}_2$ reduces to  the  lift  (\ref{isolift}) of the infinitesimal generator of the isometry to $\cal C$.

We now explain in  detail how proper affine motions of Lorentz manifolds yield the generalized symmetries displayed in Theorem \ref{symthm}. According to reference \cite{Aminova}, if $Y$ is a proper affine vector field then (I) it acts  by homothety on irreducible components in a de Rham-Wu decomposition (\ref{prod}) and/or (II) it acts via (\ref{irr}) on  an irreducible Lorentzian component, and/or (III) it acts by the  intermixing transformations generated by vector fields of the form (\ref{mix}).   We now examine each of these cases. 

In  case (I), in coordinates adapted to the product, the Lagrangian decomposes according to (\ref{prod}),
\begin{equation}
L =- \frac{1}{2} g(\dot x, \dot x) =  -\frac{1}{2} g_0(\dot x_0, \dot x_0) - \frac{1}{2} g_1(\dot x_1, \dot x_1) - \dots - \frac{1}{2} g_r(\dot x_r, \dot u_r),
\label{Lcase1}
\end{equation}
and the action of the symmetry is to rescale each of the metrics $g_i$, $i = 1,\dots, r,$ with a corresponding rescaling of the affine parameter, as discussed above when a homothety of the entire metric was considered. See (\ref{homo2}).   The action of the symmetry on the first term in (\ref{Lcase1}) is as follows.  Introduce coordinates which rectify a basis of parallel vector fields so that the metric takes the form
\begin{equation}
g_0 = \epsilon dx \otimes dx + dy \otimes dy + \cdots,
\end{equation}
where $\epsilon = \pm1$.  The proper affine vector fields can be chosen to take the form
\begin{equation}
Y = y \partial_x.
\end{equation}
We then have
\begin{align}
{\bf v}_1 &= \dot y {\partial\over\partial x} + \epsilon \dot x{\partial\over\partial y}\\
{\bf v}_2 &= (y - \epsilon s \dot y){\partial\over\partial x} + \epsilon s \dot x{\partial\over\partial y}.
\end{align} 
The corresponding portion of the Lagrangian transforms as
\begin{align}
{\rm pr}\, {\bf v}_1 \left(-\half g_0(\dot x, \dot x)\right)   &= D (-\epsilon\dot x \dot y),\\
{\rm pr}\, {\bf v}_2\left(-\half g_0(\dot x, \dot x)\right)  &= D(\epsilon s \dot x \dot y).
\end{align}

%
In case (II) it is convenient to use the coordinates $x^\alpha = (v, w, x^i)$ introduced in (\ref{irr2}), (\ref{irr3}).
The Lagrangian (\ref{L})  is given in such coordinates by
\begin{equation}
L = - \dot v \dot w - \frac{1}{2} g_{ij}(w, x^k) \dot x^i \dot x^j.
\end{equation}
The infinitesimal symmetry transformation is generated by 
\begin{align}
{\bf v}_1 &= 2\dot w {\partial\over\partial v} \\
{\bf v}_2 &= (w - 2s \dot w){\partial\over\partial v} 
\end{align}
yielding the symmetries of $L$
\begin{equation}
{\rm pr}\, {\bf v}_1(L) = D (-   (\dot w)^2), \quad {\rm pr}\, {\bf v}_2(L) = D\left[ s (\dot w)^2\right].
\end{equation}

Finally, in case (III), aside from irreducible components and their affine symmetries of type (I),  we have a metric of the form
\begin{equation}
g = \delta_{AB} d\rho^A\otimes d\rho^B + dw \otimes dv + dv\otimes dw + g_{ij}(w, z) dz^i \otimes dz^j.
\end{equation}
The intermixing transformations are of the form
\begin{equation}
Y =   w\partial_{\rho^A}.
\end{equation}
The infinitesimal transformations with, for example, $A=1$ are given by
\begin{align}
{\bf v}_1 &= \dot w {\partial\over \partial x^1} + \dot \rho^1 {\partial\over\partial v}\\
{\bf v}_2 &= (w - s\dot w){\partial\over\partial x^1} - s\dot \rho^1 {\partial\over\partial v}.
\end{align}
The Lagrangian
\begin{equation}
L  = -\half\left(\delta_{AB} \dot \rho^A \dot \rho^B + 2 \dot w \dot v + g_{ij}(w, z) \dot z^i \dot z^j\right)
\end{equation}
transforms as
\begin{equation}
{\rm pr}\, {\bf v}_1(L)  = D(- \dot \rho^1 \dot w),\quad {\rm pr}\, {\bf v}_2(L) = D(s \dot \rho^1 \dot w).
\end{equation}

\section{Affine Motions and Conservation Laws for Homogeneous Solutions of the Einstein Equations}
\label{solsection}

A number of authors have found  affine motions for various solutions of the Einstein equations, {\it e.g.,} \cite{Beldran}, \cite{Maartens}, \cite{Collinson}, \cite{Hojman2}. Besides the classification of affine motions due to Aminova \cite{Aminova}, Hall and da Costa used a classification of holonomy groups to determine which  affine symmetries may occur in a four-dimensional spacetime \cite{Hall}.   The possibilities for electrovacua in four dimensions have been examined in \cite{Carot}.  
 In this section we calculate all proper affine motions which arise for homogeneous solutions of the Einstein field equations.  This provides a complete characterization  of proper affine motions for all homogeneous solutions of the Einstein equations with matter content given by vacuum, Einstein, perfect fluid, and electromagnetic field. We also give the corresponding conservation laws for affinely parametrized geodesics.

Homogeneous spacetimes admit a transitive group of isometries.  All homogeneous solutions to the Einstein equations,
\begin{equation}
G_{ab} + \Lambda g_{ab} = \kappa T_{ab},
\end{equation}
are known for the following cases: vacuum ($T_{ab}=0 = \Lambda$);  Einstein ($T_{ab}=0$, $\Lambda \neq 0$); homogeneous perfect fluids,
\begin{equation}
T_{ab} = (\mu + p)V_a V_b + p g_{ab},\quad \mu, p = {\rm const.},\quad V_a V^a = -1,\quad  \Lambda = 0;
\end{equation}
 pure radiation,
\begin{equation}
T_{ab} =\Phi^2 k_a k_b,\quad \Phi = const., \quad \Lambda = 0, \quad k_ak^a = 0;
\end{equation}
and  homogeneous electromagnetic fields,
\begin{equation}
T_{ab} = F_a{}^c F_{bc} - \frac{1}{4} g_{ab} F_{cd} F^{cd},\quad \nabla^a F_{ab} = 0 = \nabla_{[a}F_{bc]}, \quad \Lambda = 0.
\end{equation}
The 4-velocity, radiation vector, and electromagnetic field satisfy  $ L_\xi V = 0$, $L_\xi k = 0$,  $L_\xi F = 0$, where $\xi$ is any Killing vector field, $L_\xi g = 0$.  All these solutions can be found in reference \cite{Stephani}.

Using the {\sl DifferentialGeometry} package \cite{DG} we have calculated all the  affine vector fields and first integrals for this class of solutions.  The analysis has two parts.  First, one directly solves the equations arising from infinitesimal invariance of the Christoffel symbols in the given coordinate chart:
\begin{equation}
L_Y\Gamma_{\alpha\beta}^\gamma = 0.
\label{ACeq}
\end{equation}
It is straightforward to extract from the solution space of (\ref{ACeq}) a basis for the set of proper affine vector fields (modulo homotheties). Second, the results are checked against the dimension of the vector space of solutions to (\ref{ACeq}), which can be computed {\it a priori} as follows.   

The linear system of equations (\ref{infAffine}) determining $Y$ is of {\it finite type}, so that all second and higher order derivatives of $Y$ are determined by $Y$ and its first derivatives. The dimension of the vector space of solutions in a neighborhood of a   point $p\in M$ can be determined by successively differentiating equation (\ref{infAffine}) and expressing the result in terms of $Y$ and $\nabla Y$ at $p$. This defines a system of linear equations for the $n(n+1)$ dimensional vector space of data $Y(p)$ and $\nabla Y(p)$.  If $r$ is the rank of this linear system, then the vector space of solutions to (\ref{infAffine}) is of dimension $n(n+1) - r$. 

In the following, we list the results of this analysis for all homogeneous solutions of the Einstein equations with matter  content as described above. We follow the enumeration of these solutions as given in reference \cite{Stephani}.  We present the line element of the metric, a basis for the vector space of proper affine vector fields (modulo homothetic vector fields), and the corresponding conservation laws for geodesics, calculated according to Corollary \ref{cor}. 
\begin{description}

%
%
%
 
 \item
 
  \item[\bf  Solution:] Minkowski spacetime. Coordinates $x^\alpha = (t, x, y, z)$.
 
 \item[\bf  Line element:] $ds^2 = -dt^2 + dx^2 + dy^2 + dz^2$
 
  \item[\bf  Proper affine vector fields:]  $x\partial_x$, $y\partial_y$,  $z\partial_z$, $ x^k\partial_t$, $x^i\partial_j$,  $i > j$,  $x^i = (x, y, z)$ .
 
 \item[\bf  First integrals:] $(\dot x^i)^2$, $x^i \dot x^i - s  (\dot x^i)^2$;  $ \dot t \dot x^i$, $x^i \dot t - s\dot t\dot x^i$; $\dot x^i\dot x^j$, $ x^i \dot x^j - s\dot x^i \dot x^j$, $i>j$
 
 \item

  \item[\bf  Solution:] de Sitter, and anti-de Sitter spacetime; eq. (8.33) in \cite{Stephani}. Coordinates $x^\alpha = (t, x, y, z)$.
 
 \item[\bf  Line element:] $ds^2 = \Psi^{-2}\left(-dt^2 + dx^2 + dy^2 + dz^2\right)$; $\Psi = 1 + {1\over 4} K(x^2 + y^2 + z^2 - t^2)$, $K\neq0$.
 
  \item[\bf  Proper affine vector fields:] $\nexists$
 
 \item[\bf  First integrals:] $\nexists$
 
  \item
\item[\bf Solution:] Petrov vacuum solution; eq. (12.14) in \cite{Stephani}. Coordinates $x^\alpha = (t, x, y, z)$.

 \item[\bf Line element:]  $k^2 ds^2 = dx^2 + e^{-2x}dy^2 + e^x [\cos\sqrt{3}x(dz^2 - dt^2) - 2\sin\sqrt{3}x \, dz \, dt]$

 \item[\bf Proper affine vector field:] $\nexists$

 \item[\bf First integrals: ] $\nexists$

 \item

%
%
%
%
 
 \item
 
  \item[\bf  Solution:] 
Einstein spaces  arising as a product of 2-dimensional spaces of constant curvature, $S^2 \times S^2$ and $H^2 \times H^2$;   eq. (12.8) in \cite{Stephani}. Coordinates $x^\alpha = (t, x, y, z)$.

 \item[\bf  Line element:] 
$
 ds^2 = A^2[dx^2 + \Sigma^2(x,k_1)dy^2] + B^2[dz^2 - \Sigma^2(z,k_2)dt^2],
$ where $k_1 = k_2 = \pm1,\quad 1/A^2 = 1/B^2 = \Lambda$, 
$ \Sigma(x, 1) = 
 \sin(x), \quad \Sigma(x, -1) = 
 \sinh(x),
 $
 \item[\bf  Proper affine vector field:] $\nexists$
 
 \item[\bf First Integrals:] $\nexists$
 
 \item

 \item[\bf Solution:]  Einstein space of Petrov type N; equation (12.34) in \cite{Stephani}. Coordinates $x^\alpha = (u, x, y, z)$.
 
 \item[\bf Line element:] $ds^2 = 3dz^2/|\Lambda| + \varepsilon e^z dx^2 + e^{-2z}(dy^2 + 2 du \, dx)$, $\Lambda < 0$, $\varepsilon = \pm1$.
 
  \item[\bf Proper affine vector fields:] $\nexists$
 
 \item[\bf First integrals:] $\nexists$
 
 \item

  \item[\bf Solution:]  Einstein space  of Petrov type III; equation (12.35) in \cite{Stephani}. Coordinates $x^\alpha = (u, x, y, z)$.
 
 \item[\bf Line element:] $ds^2 = 3 dz^2/|\Lambda| + e^{4z}dx^2 + 4e^z dx \, dy + 2e^{-2z}(dy^2 + du \, dx)$, $\Lambda<0$.
 
  \item[\bf Proper affine vector fields:] $\nexists$
 
 \item[\bf First integrals:] $\nexists$
 
 \item

  \item[\bf Solution:] Plane wave electrovacuum; eq. (12.12) with $\epsilon = 0$ in \cite{Stephani}. Coordinates $x^\alpha = (u, v, \xi, \overline\xi)$.

 \item[\bf Line element:]  $ds^2 = -\left[2a (\xi^{2} e^{-2i\gamma u}  + \overline\xi^2 e^{2i\gamma u}) + 2b^2 \xi\overline\xi\right] du^2 - 2du\,  dv  + 2d\xi \, d\overline \xi $

 \item[\bf Proper affine vector field:] $Y = u\, \partial_v$

 \item[\bf First integrals: ] $\dot u^2$, $u\dot u - s \dot u^2$
   \item

  \item[\bf Solution:] Plane wave electrovacuum; eq. (12.12) with $\epsilon = 1$ in \cite{Stephani}. Coordinates $x^\alpha = (u, v, \xi, \overline\xi)$.

 \item[\bf Line element:]  $ds^2 = -\left[2a (\xi^{2} e^{-2i\gamma u}  + \overline\xi^2 e^{2i\gamma u}) + 2b^2 \xi\overline\xi\right] du^2 - 2e^u du\,  dv  + 2d\xi \, d\overline \xi $

 \item[\bf Proper affine vector field:] $Y = e^u\, \partial_v$

 \item[\bf First integrals: ] $e^{2u}\dot u^2$, $e^{2u}(\dot u - s \dot u^2)$
 
 \item

 \item

  \item[\bf Solution:] Bertotti-Robinson electrovacuum; eq. (12.16) in \cite{Stephani}. Coordinates $x^\alpha = (t, x, \theta, \phi)$.

 \item[\bf Line element:]  $ds^2 = k^2\left[d\theta^2 + \sin^2(\theta) d\phi^2 + dx^2 - \sinh^2(x) dt^2\right]$

 \item[\bf Proper affine vector field:] $\nexists$

 \item[\bf First integrals: ] $\nexists$

\item
 
 \item[\bf Solution:] Pure radiation\footnote{This pure radiation solution is actually an electrovacuum with an inhomogeneous electromagnetic field \cite{Torre}. }; equation (12.36) in \cite{Stephani}. Coordinates $x^\alpha = (u, v, x, y)$.
 
 \item[\bf Line element:] $ds^2 = dx^2 + dy^2 + 2 du dv -2 e^{2\rho x} du^2$, $\rho = const.\neq 0$
 
 \item[\bf Proper affine vector fields:] $y\partial_y$, $u\partial_v$, $u\partial_y$
 
 \item[\bf First integrals:] $\dot y^2$, $y\dot y - s \dot y^2$, $\dot u^2$, $u\dot u - s\dot u^2$, $\dot u \dot y$, $u\dot y -s\dot u\dot y$
  
\end{description}

 \item

 \item[\bf Solution:] Einstein static universe (and its hyperbolic analog). Perfect fluid source; eq. (12.9) and (12.24) in \cite{Stephani}.  Coordinates $(t, r, \theta, \phi)$.
  
 \item[\bf Line element:] 
 \begin{equation}
 ds^2 = a^2 \left[dr^2 + \Sigma^2(r, k)(d\theta^2 + \sin^2\theta d\phi^2)\right] - dt^2,\quad k = \pm 1.
 \end{equation}
 
 \item[\bf Proper affine vector fields:] $Y= t\, \partial_t$
 
 \item[\bf First integrals:] $ \dot t^2$, $t\dot t - s \dot t^2$
  
 \item

 \item[\bf Solution:] G\"odel spacetime.  Perfect fluid source; eq. (12.26) in \cite{Stephani}. Coordinates $x^\alpha = (t, x, y, z)$.
 
 \item[\bf Line element:] $ds^2 = a^2[dx^2 + dy^2 + \frac{1}{2}e^{2x}dz^2 - (dt+ e^{x} dz)^2]$
 
 \item[\bf Proper affine vector fields:] $Y = y\, \partial_y$
 
 \item[\bf First integrals:] $\dot y^2$, $y\dot y - s \dot y^2 $
 
 \item
 
 \item[\bf Solution:] Unimodular group. Farnsworth-Kerr class I. Perfect fluid source; eq. (12.27) in \cite{Stephani}. Coordinates $x^\alpha = (t, x, y, z)$.
 
 \item[\bf Line element:] $ds^2 = a^2[(1-k)( {\omega^1})^2 + (1+k)( {\omega^2})^2 + 2( {\omega^3})^2 - (dt + \sqrt{1 - 2k^2} {\omega^3})^2]$, $d\omega^a = \epsilon^a{}_{bc} \omega^b\wedge \omega^c$, $0<|k|<\frac{1}{2}$
 
  \item[\bf Proper affine vector fields:] $\nexists$
 
 \item[\bf First integrals:] $\nexists$

\item

\item[\bf Solution:] Unimodular group. Farnsworth-Kerr class II. Perfect fluid source; eq. (12.28) in \cite{Stephani}. 
 
 \item[\bf Line element:] $ds^2 = a^2[(1-k)( {\omega^1})^2 + (1+k)( {\omega^2})^2 + (du + \sqrt{1-2k^2}  {\omega^3})^2 - 2( {\omega^3})^2]$, $d\omega^1 = \omega^2\wedge\omega^3, d\omega^2 = \omega^3 \wedge\omega^2, d\omega^3 = \omega^2\wedge\omega^1$,  $1 < 4k^2 < 2$,
 
  \item[\bf Proper affine vector fields:] $\nexists$
 
 \item[\bf First integrals:] $\nexists$

 \item
 
\item[\bf Solution:] Unimodular group. Farnsworth-Kerr class III. Perfect fluid source; eq. (12.29) in \cite{Stephani}. 
 
 \item[\bf Line element:] $ds^2 = a^2[(1-s)( {\omega^1})^2 + (1 + s)( {\omega^2})^2 + du^2 - 2( {\omega^3})^2]$, $d\omega^1 = \omega^2 \wedge \omega^3$, $d\omega^2 = \omega^3\wedge \omega^1$, $d\omega^3 = \omega^2\wedge \omega^1$, $|s|<1$.
 
 \item[\bf Proper affine vector fields:] $Y = u\partial_u$
 
 \item[\bf First integrals:]  $\dot u^2$, $u \dot u - s \dot u^2$.
 
 \item

  \item[\bf Solution:] Non-unimodular group.  Perfect fluid source; eq. (12.30) in \cite{Stephani} with generic value for parameter $s$. Coordinates $x^\alpha = (t, x, y, z)$.
 
 \item[\bf Line element:] $ds^2 = a^2[4(Ae^{Az}dt +Be^{Bz}dx)^2/b^2 + (e^{Fz}dy)^2 + dz^2 - (e^{Az}dt + e^{Bz} dx)^2]$, $A = \frac{1}{2}(1-\beta)$, $B = \frac{1}{2}(1+\beta)$, $F=1-s^2$, $b=\sqrt{2} s(2-s^2)$, $\beta^2 = 1+2s^2(1-s^2)(3-s^2) >0$, $\frac{1}{2}\leq s^2\leq 2$.
 
  \item[\bf Proper affine vector fields:] $\nexists$
 
 \item[\bf First integrals:] $\nexists$

 \item

  \item[\bf Solution:] Non-unimodular group.  Perfect fluid source; eq. (12.31) in \cite{Stephani} with generic value for parameter $s$. Coordinates $x^\alpha = (t, x, y, z)$.
 
 \item[\bf Line element:] $ds^2  = a^2[e^z dx^2 + e^{2Fz}dy^2 + dz^2 - \frac{1}{4}(b-1/b)^2 e^{z}(dt+z \, dx)^2]$,  $b=\sqrt{2} s(2-s^2)$, $F=1-s^2$, $\frac{1}{2}\leq s^2\leq 2$
 
  \item[\bf Proper affine vector fields:] $\nexists$
 
 \item[\bf First integrals:] $\nexists$

 \item

  \item[\bf Solution:] Non-unimodular group.  Perfect fluid source; eq. (12.32) in \cite{Stephani} with generic value for parameter $s$. Coordinates $x^\alpha = (t, x, y, z)$.
 
 \item[\bf Line element:] 
 \begin{align}
 ds^2 &= a^2\Big[(e^{Fz} dy)^2 + dz^2 + e^z ((\cos \,kz - 2k \sin \,kz)dt \nonumber \\
 &+ (2k \cos \,kz + \sin \,kz) dx)^2/b^2- e^z(\cos \,kz \, dt + \sin \,kz \, dx)^2\Big],  \nonumber\\
  &b=\sqrt{2} s(2-s^2), \quad F=1-s^2,\quad \frac{1}{2}\leq s^2\leq 2\nonumber
\end{align} 
 
  \item[\bf Proper affine vector fields:] $\nexists$
 
 \item[\bf First integrals:] $\nexists$

\item

\vfill\eject

\section*{Acknowledgment}

This work was supported in part by National Science Foundation grant ACI-1642404 to Utah State University (CT) and a USU-Physics Howard Blood Fellowship (DM).

\end{document}